\let\accentvec\vec
\let\vec\accentvec
\begin{document}
\numberwithin{equation}{section}

\title{FDR-HS: An Empirical Bayesian Identification of Heterogenous Features in Neuroimage Analysis}
\titlerunning{FDR-HS: an Empirical Bayesian identification of heterogenous features}  

\author{Xinwei Sun\inst{1,6} \and Lingjing Hu\inst{2}(\Letter) \and Fandong Zhang\inst{3,6}  \and Yuan Yao\inst{4}(\Letter) 
\and Yizhou Wang\inst{5,6}} 
\institute{School of Mathematical Science, Peking University, Beijing, 100871, China \and Yanjing Medical College, Capital Medical University, Beijing, 101300, China \and Key Laboratory of Machine Perception (Ministry of Education), Department of Machine Intelligence, School of Electronics Engineering and Computer Science,Peking University, Beijing 100871, China \and Hong Kong University of Science and Technology and Peking University, China \and National Engineering Laboratory for Video Technology, Key Laboratory of Machine Perception, School of EECS, Peking University, Beijing, 100871, China \and Deepwise Inc., Beijing, 100085, China }
\authorrunning{Xinwei Sun et al.} 

\maketitle     
\begin{abstract}
Recent studies found that in voxel-based neuroimage analysis, detecting and differentiating ``procedural bias" that are introduced during the preprocessing steps from lesion features, not only can help boost accuracy but also can improve interpretability. To the best of our knowledge, GSplit LBI is the first model proposed in the literature to simultaneously capture both procedural bias and lesion features. Despite the fact that it can improve prediction power by leveraging the procedural bias, it may select spurious features due to the multicollinearity in high dimensional space. Moreover, it does not take into account the heterogeneity of these two types of features. In fact, the procedural bias and lesion features differ in terms of volumetric change and spatial correlation pattern. To address these issues, we propose a ``two-groups" Empirical-Bayes method called ``FDR-HS" (False-Discovery-Rate Heterogenous Smoothing). Such method is able to not only avoid multicollinearity, but also exploit the heterogenous spatial patterns of features. In addition, it enjoys the simplicity in implementation by introducing hidden variables, which turns the problem into a convex optimization scheme and can be solved efficiently by the expectation-maximum (EM) algorithm. Empirical experiments have been evaluated on the Alzheimer's Disease Neuroimage Initiative\thinspace(ADNI) database. The advantage of the proposed model is verified by improved interpretability and prediction power using selected features by FDR-HS. 
\keywords{$\cdot$ Voxel-based Structural Magnetic Resonance Imaging $\cdot$ False Discovery Rate Heterogenous Smoothing $\cdot$ Procedural Bias $\cdot$ Lesion Voxel}
\end{abstract}

Dedicated to Professor Bradley Efron on the occasion of his 80th birthday. 

\section{Introduction}

In recent years, the issue of model interpretability attracts an increasing attention in voxel-based neuroimage analysis of disease prediction, e.g. \cite{haufe2014interpretation,biessmann2012interpretability}. Examples include, but not limited to, the preprocessed features on structural Magnetic Resonance Imaging (sMRI) images that usually contain the following voxel-wise features: (1) lesion features that are contributed to the disease (2) procedural bias introduced during the preprocessing steps and shown to be helpful in classification \cite{sun2017gsplit,vbm2001} (3) irrelevant or null features which are uncorrelated with disease label. Our goal is to stably select non-null features, i.e. lesion features and procedural bias with high power/recall and low false discovery rate (FDR).

The lesion features have been the main focus in disease prediction. In dementia disease such as Alzheimer's Disease (AD), such features are thought to be geometrically clustered in atrophied regions (hippocampus and medial temporal lobe etc.), as shown by the red voxels in Fig.~\ref{figure:illustrationAB} (A). To explore such spatial patterns, multivariate models with Total Variation \cite{TV} regularization can be applied by enforcing smoothness on the voxels in neighbor, e.g. the $n^{2}$GFL \cite{n2gfl} can stably identify the early damaged regions in AD by harnessing the lesions.

Recently, another type of features called procedural bias, which are introduced during the preprocessing steps, are found to be helpful for disease prediction \cite{sun2017gsplit}. Again, taking AD as an example, the procedural bias refer to the mistakenly enlarged Gray Matter (GM) voxels surrounding locations with cerebral spinal fluid (CSF) spaces enlarged, e.g. lateral ventricle, as shown in Fig.~\ref{figure:illustrationAB} (A). This type of features has been ignored in the literature until recently, when the GSplit LBI \cite{sun2017gsplit} was targeted on capturing both types of features via a split of tasks of TV regularization (for lesions) and disease prediction with general linear model (with procedural bias). By leveraging such bias, it can outperform models which only focus on lesions in terms of prediction power and interpretability. 

However, GSplit LBI may suffer from inaccurate feature selection due to the following limitations in high dimensional feature space: \footnote{Please refer supplementary material for detailed and theoretical discussion}: (1) multicollinearity: high correlation among features in multivariate models \cite{tu2005problems}; (2) ``heterogenous features": the procedural bias and lesion features differ in terms of volumetric change (enlarged v.s. atrophied) and particularly spatial pattern (surroundingly distributed v.s. spatially cohesive). Specifically, the multicollinearity could select spurious null features which are inter-correlated with non-nulls. Moreover, GSplit LBI fails to take into account the heterogeneity since it enforces correlation on features without differentiation. Such problems altogether may result in inaccurate selection of non-nulls, especially procedural bias. As shown in Fig.~\ref{figure:illustrationAB} (B) and Table~\ref{table:mdc}, the procedural bias selected by GSplit LBI are unstably scattered on regions that are less informative than ventricle. Moreover, the collinearity among features tends to select a subset of features among correlated ones, as discussed in \cite{elasticnet}. Such a limitation leads to the ignorance of many meaningful regions (such as medial temporal lobe, thalamus etc.) of GSplit LBI in selecting lesion features, as identified by the purple frames of FDR-HS in Fig.~\ref{figure:illustrationAB} (B). Moreover, the two problems above may get worse as dimensionality grows. In our experiments with a fine resolution ($4\times4\times4$ of 20,091 features), the prediction accuracy of GSplit LBI deteriorates to $89.77\%$ (as shown in Table~\ref{sec:experimental}), lower than $90.91\%$ reported in \cite{sun2017gsplit} with a coarse resolution ($8\times8\times8$ of 2,527 features). 
\begin{figure}
\centering
\begin{minipage}[t]{0.64\linewidth}
    \includegraphics[width= \columnwidth]{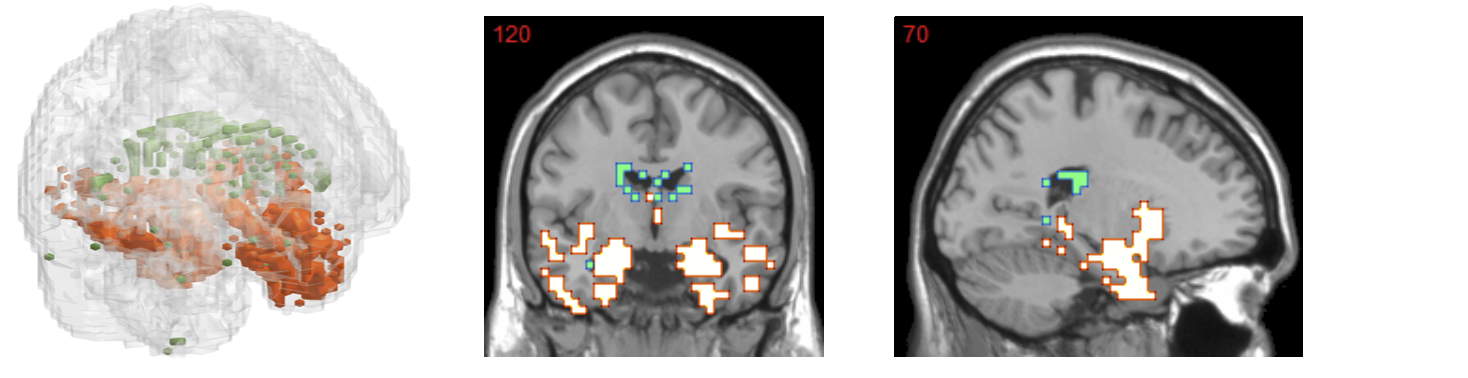}\begin{center}{A}\end{center}
    \end{minipage}
\begin{minipage}[t]{0.291\linewidth}
    \includegraphics[width= \columnwidth]{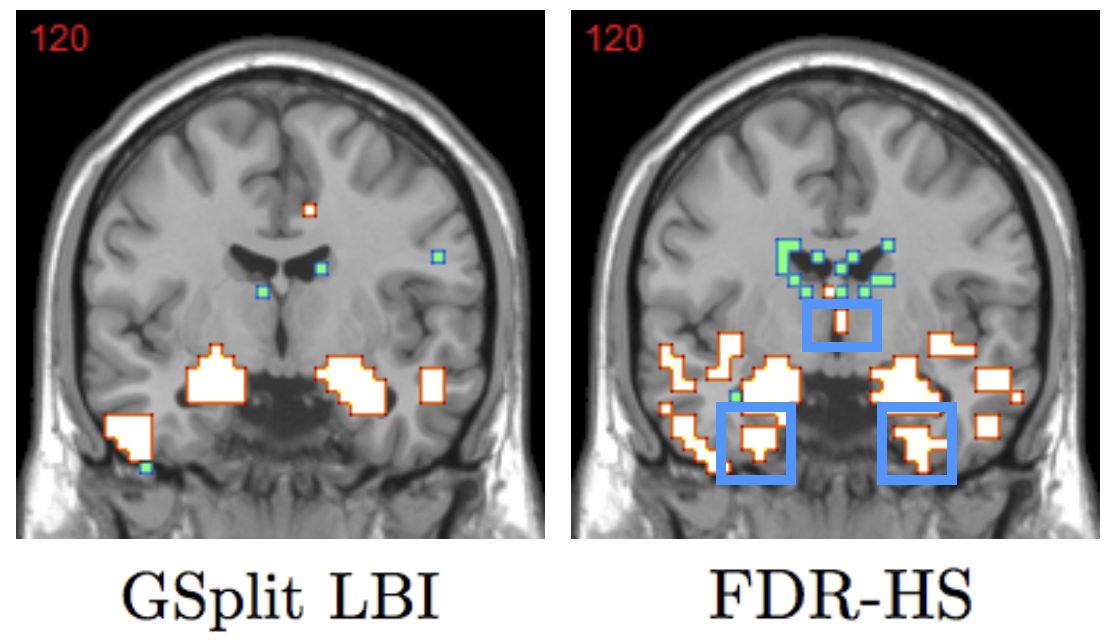}\begin{center}{B}\end{center}
    \end{minipage}
\caption{A: the features selected by FDR-HS (green denotes procedural bias; red denotes lesion features which are geometrically clustered) B: comparison with GSplit LBI }\label{figure:illustrationAB}
\end{figure}

To resolve the problems above, we propose a ``two-groups" empirical Bayes method to identify heterogenous features, called FDR-HS standing for ``FDR Heterogenous smoothing" in this paper. As a univariate FDR control method, it avoids the collinearity problem by proceeding voxel-by-voxel, as discussed in \cite{efron2016computer}. Moreover, it can deal with heterogeneity by regularizing on features with different levels of spatial coherence in different feature groups, which remedies the problem of losing spatial patterns that most conventional mass-univariate models suffer from, such as two sample T-test, BH$_{q}$ \cite{benjamini1995controlling} and LocalFDR \cite{efron2016computer}. 
By introducing a binary latent variable, our problem turns into a convex optimization and can be solved efficiently via EM algorithm like \cite{tansey2017false}. 
The method is applied to a voxel-based sMRI analysis for AD with a fine resolution (4$\times$4$\times$4 of 20,091 features). As a result, our proposed method exhibits a much stabler 
feature extraction than GSplit LBI, and achieves much better classification accuracy at $91.48\%$. 

 \section{Method}
Our dataset consists of $p$ voxels and $N$ samples $\{x_{i},y_{i}\}_{1}^{N}$ where $x_{ij}$ denotes the intensity value of the $j^{th}$ voxel of the $i^{th}$ sample and $y_{i} = \{\pm 1\}$ indicates the disease status ($-1$ denotes AD). The FDR-HS method is proposed to select non-null features. Such method is the combination of ``two-groups" model and heterogenous regularization, which is illustrated in Fig.~\ref{figure:illustration} and discussed below. 
\begin{figure}[!h]
  \centering
  \includegraphics[width = 0.9651\textwidth]{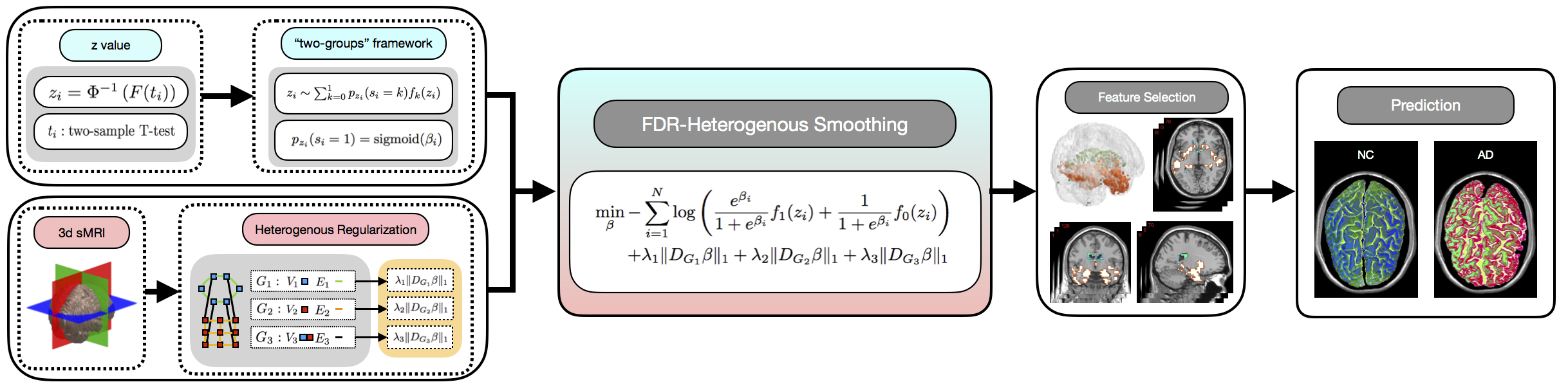}
   \caption{Illustration of FDR-HS model.}
 \label{figure:illustration}
 \end{figure}
\noindent
\textbf{Model Formulation.} Assuming for each voxel $i \in \{1,...,p\}$, the statistic $z_{i}$ is sampled from the following mixture:
\begin{align}
\label{eq-z}
z_{i} & \sim \sum_{k=0}^{1} \mathrm{p}(s_{i} = k)\mathrm{p}(z_{i} | s_{i} = k) = c_{i} f_{1}(z_{i}) + (1 - c_{i})f_{0}(z_{i}), 
\end{align} 
where $s_{i}$ is a latent variable indicating if the voxel $i$ belongs to the group of null features ($s_i=0$) or the group of non-null ones ($s_i=1$), $c_i = \mathrm{p}(s_{i} = 1) = \mathrm{sigmoid}(\beta_{i}) = e^{\beta_{i}} / \left( 1 + e^{\beta_{i}}\right)$ and $z_{i} =  \mathrm{\Phi}^{-1}\left(F_{N-2}(t_{i})\right)$ with $t_{i}$ computed by two-sample $t$-test.
Correspondingly, $f_{0}(\cdot)$ is density function of nulls, i.e. uncorrelated with AD and $f_{1}(\cdot)$ is that of non-nulls, i.e. procedural bias and lesions. The loss function can thus be defined as negative log-likelihood of $z_{i}$:
\begin{align}
\ell(\beta) = -\sum_{i=1}^{N} \log{\left(\frac{e^{\beta_{i}}}{1 + e^{\beta_{i}}}f_{1}(z_{i}) + \frac{1}{1 + e^{\beta_{i}}}f_{0}(z_{i})\right)}
\label{eq:loss-first}
\end{align}
which can be viewed as logistic regression (when $f_{0}$ and $f_{1}$ are replaced with binaries, as~\eqref{eq:loss-h}) with identity design matrix since~\eqref{eq-z} proceeds voxel-by-voxel. Hence, it does not have the problem of multicollinearity.

\noindent
\textbf{Selecting Features.} To select features, we compute the posterior distribution of $s_{i}$ conditioned on $z_{i}$ and $\widehat{\beta}_{i}$ (estimated $\beta_{i}$) and features with
\begin{align}
\label{eq:pos}
\mathrm{p}(s_{i} = 0 | z_{i}, \widehat{\beta}_{i}) = \frac{(1 - \widehat{c}_{i})f_{0}(z_{i})}{\widehat{c}_{i}f_{1}(z_{i}) + (1 - \widehat{c}_{i})f_{0}(z_{i})} < \gamma \ \left( \widehat{c}_{i} = e^{\widehat{\beta}_{i}} / \left(1 + e^{\widehat{\beta}_{i}} \right) \right) 
\end{align}
are selected. The $\gamma \in (0,1)$ is pre-setting threshold parameter.

\noindent
\textbf{Heterogenous Spatial Smoothing.} 
However, \eqref{eq-z} may lose spatial structure of non-nulls, especially lesion features. Besides, note that the procedural bias and lesion features are heterogenous in terms of volumetric change and level of spatial coherence. Hence, to capture the spatial structure of heterogenous features, we split the graph of voxels which denotes as $\bm{G}$ \footnote{Here $\bm{G} = (\bm{V},\bm{E})$, where $\bm{V}$ is the node set of voxels, $\bm{E}$ is the edge set of voxel pairs in neighbor (e.g. 3-by-3-by-3).} into three subgraphs, i.e. $\bm{G} = \bm{G}_{1} \cup \bm{G}_{2} \cup \bm{G}_3$ with: 
\begin{subequations}
 \label{eq:subgraph}
	\begin{align}
	\label{eq:subgraph-1}
	\bm{G}_{1} & = \left(\bm{V}_{1}, \bm{E}_{1}\right), \ \bm{V}_{1} = \{i: z_{i} \leq 0\}, \ \bm{E}	_{1}  =  \{(i,j) \in \bm{E}: z_{i} \leq 0, z_{j} \leq 0\} \\
	\label{eq:subgraph-2}
	\bm{G}_{2} & =  \left(\bm{V}_{2}, \bm{E}_{2}\right), \ \bm{V}_{2}  = \{i: z_{i} > 0 \}, \ \bm{E}_{2}  = \{(i,j) \in \bm{E}: z_{i} > 0, z_{j} > 0\} \\
	\label{eq:subgraph-3}
	\bm{G}_{3} & = \left(\bm{V}_{3}, \bm{E}_{3}\right), \ \bm{V}_{3} = \bm{V}_{1} \cup \bm{V}_{2}, \ \bm{E}_{3} = \{(i,j) \in \bm{E}: z_{i} > 0, z_{j} \leq 0\}
	\end{align}
\end{subequations}
where $\bm{G}_{1}$ denotes the subgraph restricted on enlarged voxels (procedural bias since -1 denotes AD); $\bm{G}_{2}$ denotes the subgraph restricted on degenerate voxels (lesion features); $\bm{G}_{3}$ denotes the bipartite graph with the edges connecting enlarged and degenerate voxels. The optimization function can be redefined as:
\begin{align}
g(\beta) = \ell(\beta) + \lambda_{pro} \Vert D_{\bm{G}_{1}} \beta \Vert_{1} + \lambda_{les} \Vert D_{\bm{G}_{2}} \beta \Vert_{1} + \lambda_{pro\text{-}les} \Vert D_{\bm{G}_{3}} \beta \Vert_{1} 
\label{eq:loss-first-penalty-heter}
\end{align}
where $D_{\bm{G}_{k}}\beta = \sum_{(i,j) \in \bm{E}_{k}} \beta_{i} - \beta_{j}$ for $k \in \{1,2,3\}$ denote graph difference operator on $\bm{G}_{k=1,2,3}$. By setting the group of regularization hyper-parameters $\{\lambda_{pro}, \lambda_{les}, \lambda_{pro\text{-}les}\}$ with different values, we can enforce spatial smoothness on three subgraphs at different level in a contrast to the traditional homogeneous regularization in \cite{tansey2017false}. 
The choice of each hyper-parameter, similar to \cite{tansey2017false}, it is a trade-off between over-fitting and over-smoothing. Too small value tends to select features more than needed, while too large value will oversmooth  hence the features are less clustered. Note that lesion features are more spatially coherent than procedural bias and they are located in different regions, the reasonable choice of regularization hyper-parameters tend to have $\lambda_{les} \leq \lambda_{pro} \leq\lambda_{pro\text{-}les}$.\newline
 \noindent \textbf{Optimization.} Note that the function~\eqref{eq:loss-first-penalty-heter} is not convex. Hence we adopted the same idea in \cite{tansey2017false} that introduced the latent variables $s_{i}$ and $=1$ if $z_{i} \sim f_{1}(z)$ and 0 if $z_{i} \sim f_{0}(z)$.
The $\ell(\beta)$ and $g(\beta)$ are modified as:
\begin{align}
\label{eq:loss-h}
\ell(\beta,s) & = \sum_{i=1}^{N} \left\{\log \left(1 + e^{\beta_{i}} \right) - s_{i}\beta_{i}\right\} \\
\label{eq:loss-h-penalty-heter}
g(\beta,s) & = \ell(\beta,s) +  \lambda_{pro} \Vert D_{\bm{G}_{1}} \beta \Vert_{1} + \lambda_{les} \Vert D_{\bm{G}_{2}} \beta \Vert_{1} + \lambda_{pro\text{-}les} \Vert D_{\bm{G}_{3}} \beta \Vert_{1}
\end{align}
To solve~\eqref{eq:loss-h-penalty-heter}, we can implement Expectation-Maximization (EM) algorithm to alternatively solve $\beta$ and $s$. Suppose currently we are in the $(k+1)^{th}$ iteration. \textbf{\emph{In the E-step}}, we can estimate $s_{i}$ by expectation value conditional on $(\beta^k,z_{i})$: $\tilde{s}_{i} = \mathrm{E}(s_{i} | \beta^k, z_{i}) = \frac{c^k_{i} f_{1}(z_{i})}{c^k_{i} f_{1}(z_{i}) + (1 - c^k_{i})f_{0}(z_{i})}$.\newline
\textbf{\emph{In the M-step}}, we plug $\tilde{s}_{i}$ into~\eqref{eq:loss-h-penalty-heter}, denote $\widetilde{D}_{\bm{G}} = \left[D_{\bm{G}^T_{1}}, \frac{\lambda_{les}}{\lambda_{pro}}D_{\bm{G}^T_{2}}, \frac{\lambda_{pro\text{-}les}}{\lambda_{pro}}D_{\bm{G}^T_{3}}\right]^T$ and expand $\ell(\beta | \tilde{s}^k)$ using a second-order Taylor approximation at the $\beta^k$. Then the M-step turns into a generalized lasso problem with square loss:
\begin{align}
\min_{\beta} \frac{1}{2} \Vert \tilde{y} - \widetilde{X}\beta \Vert_{2}^2 + \lambda_{pro} \Vert \widetilde{D}_{\bm{G}} \beta \Vert_{1} 
\label{eq:m-step-leastsquare-new}
\end{align}
where $\widetilde{X} = diag\{\sqrt{w_{1}},...,\sqrt{w_{p}}\}$ and $\tilde{y}_{i} = \sqrt{w_{i}}\left( \beta^k_{i} - \triangledown_{\beta} \ell(\beta | \tilde{s}^k_i)_{|_{\beta^k}} / w_{i}\right)$ with $w_{i} = \triangledown_{\beta}^2 \ell(\beta | \tilde{s}_i)_{|_{\beta^k}}$.
Note that $X$ and $\widetilde{D}_{\bm{G}}$ are sparse matrices, hence~\eqref{eq:m-step-leastsquare-new} can be efficiently solved by Alternating Direction Method of Multipliers (ADMM) \cite{boyd2011distributed} which has a  complexity of $O(p\log{p})$.

\noindent
\textbf{Estimation of $f_{0}$ and $f_{1}$.} Before the iteration, we need to estimate $f_{0}(z)$ and $f_{1}(z)$. The marginal distribution of $z$ can be regarded as mixture models with $p$ components: $z \sim \frac{1}{p} \sum_{i=1}^{p}
g_{i}(z), \ g_{i}(z) = p(s_{i})p(z | s_{i}) = c_{i}f_{1}(z) + (1 - c_{i})f_{0}(z)$
Hence, the marginal distribution of $z$ is $f(z) = \bar{c}f_{1}(z) + (1 - \bar{c})f_{0}(z)$, 
which is equivalent to LocalFDR \cite{efron2016computer}. We can therefore implement the CM (Central Matching) \cite{efron2016computer} method to estimate $\{f_{0}(z),\bar{c}\}$ and kernel density to estimate $f(z)$. The $f_{1}(z)$ can thus be given as $\left(f(z) - f_{0}(z)\bar{c}\right) / (1 - \bar{c})$.

\section{Experimental Results}
\label{sec:experimental}
In this section, we evaluate the proposed method by applying it on the ADNI database \url{http://adni.loni.ucla.edu}. The database is split into 1.5T and 3.0T (namely 15 and 30) MRI scanner magnetic field strength datasets. The 15 dataset contains 64 AD, 110 MCI (Mild Cognitive Impairment) and 90 NC, while the 30 dataset contains 66 AD and 110 NC.
After applying DARTEL VBM \cite{vbm_compute} preprocessing pipeline on the data with scale of 4$\times$4$\times$4 mm$^{3}$ voxel size, there are in total 20,091 voxels with average values in GM population on template greater than 0.1 and they are served as input features. We designed experiments on 1.5T AD/NC, 1.5T MCI/NC and 3.0T AD/NC tasks, namely 15ADNC, 15MCINC and 30ADNC, respectively. 

\subsection{Prediction Results}
\label{prediction}

To test the efficacy of selected features by FDR-HS and compare it with other univariate models (as listed in Table~\ref{table:acc}), we feed them into elastic net classifier, which has been one of the state-of-the-arts in the prediction of neuroimage data \cite{shen2011identifying}. 
The hyper-parameters are determined by grid-search. In details, the threshold hyper-parameter of p-value in T-test and q-value in BH$_{q}$ are optimized through $\{0.001,0.01,0.02,0.05,0.1\}$; the threshold hyper-parameter for choosing non-nulls, i.e. $\gamma$ for FDR-HS~\eqref{eq:pos} and the counterpart of LocalFDR \cite{efron2016computer}, are chosen from $\{0.1,0.2,...,0.5\}$. Besides, the regularization parameters $\lambda_{pro}$, $\lambda_{les}$ and $\lambda_{pro\text{-}les}$ of FDR-HS are ranged in $\{0.1,0.2,...,2\}$. For elastic net, the regularization parameter is chosen from $\{0.1,0.2,...,2,5,10\}$; the mixture parameter $\alpha$ is from $\{0,0.01,...,1\}$. Moreover, we compare our model to GSplit LBI and elastic net, adopting the same optimized strategy for hyper-parameters in \cite{sun2017gsplit} (the top 300 negative voxels are identified as procedural bias \cite{sun2017gsplit}) and those of elastic net following after the univariate models, as mentioned above. 

A 10-fold cross-validation strategy is applied and the classification results for all tasks are summarized in Table~\ref{table:acc}. As shown, our method yields better results than others in all cases, that includes: (1) FDR-HS can select features with more prediction power than other univariate models due to the ability to capture heterogenous spatial patterns; (2) FDR-HS can achieve better classification results than multivariate methods in high dimensional settings, in which the non-nulls may be represented by other nulls that are highly correlated with them. 

\begin{table}[!h]
\caption{Comparison between FDR-HS and others on 10-fold classification result}
\footnotesize
\centering
\begin{tabular}{|c|c|c|c|c|c|c|c|c|c|c|c|c|c|c|c|c|c|}
\hline
\multirow{2}{*}{} & \multicolumn{4}{c|}{Univariate + ElasticNet} &  \multicolumn{2}{c|}{Multivariate} \\ \cline{2-7}
& T-test & BH$_{q}$ \cite{benjamini1995controlling}  & LocalFDR \cite{efron2016computer}  & \textbf{FDR-HS} & GSplit LBI \cite{sun2017gsplit} & Elastic Net \cite{elasticnet} \\
\hline 
15ADNC  & $89.61\%$ &  $89.61\%$ & $87.01\%$ &  $\textbf{90.26\%}$ & $85.06\%$ & $87.01\%$ \\
\hline
15MCINC  & $70.50\%$ & $71.00\%$ & $73.50\%$ &  $\textbf{75.00\%}$ & $72.50\%$ & $72.00\%$ \\
\hline
30ADNC  & $88.64\%$  & $89.77\%$ & $89.77\%$  & $\textbf{91.48\%}$ & $89.77\%$ & $88.07\%$ \\
\hline
\end{tabular} 
\label{table:acc}
\end{table}


\subsection{Feature Selection Analysis}
\label{lesion}
We used 2-d images of 30ADNC to visualize the features of all methods under the hyper-parameters that give the best accuracy. As shown in Fig.~\ref{figure:30ADNC}, the lesion features selected by FDR-HS are located clustered in early damaged regions; while procedural bias are surrounding around lateral ventricle. Besides, such a result is given by $\lambda_{les} < \lambda_{pro} < \lambda_{pro\text{-}les}$, which agrees with that the larger value results in features with lower level of spatial coherence. 
In contrast, the lesions selected by T-test and BH$_{q}$ are scattered and redundant; some procedural bias around lateral ventricle are missed by BH$_{q}$ and LocalFDR. Moreover, GSplit LBI selected procedural bias on regions with CSF space less enlarged than lateral ventricle; besides, it ignored lesions located in medial temporal lobe, Thalamus and Fusiform etc., which are believed to be the early damaged regions \cite{aggleton2016thalamic,galton2001differing}.
\begin{figure}[!t]
    \centering
    \includegraphics[width = 1\textwidth]{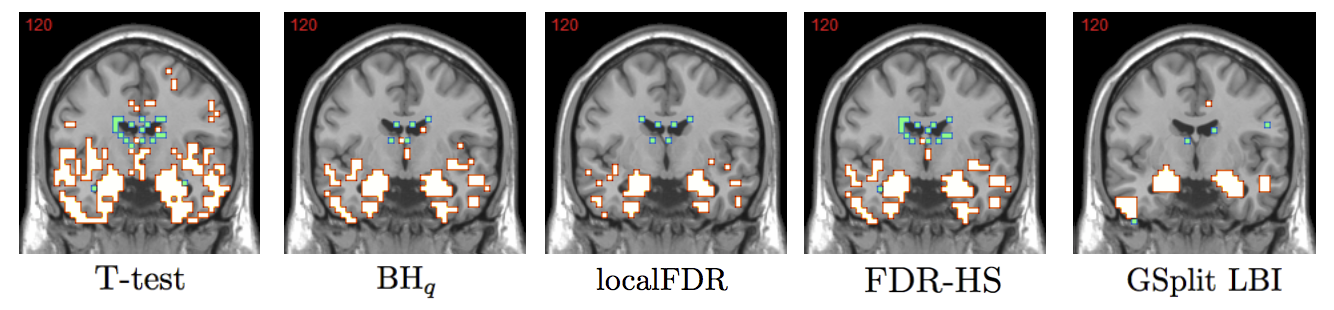}
    \caption{   \footnotesize The comparison of FDR-HS between others in terms of feature selection (30ADNC). Red denotes lesions; blue denotes procedural bias.}
    \label{figure:30ADNC}
\end{figure}


Besides, we also evaluated the stability of selected features using multi-set Dice Coefficient\thinspace(mDC) measurement defined in \cite{n2gfl}. Larger mDC implies more stable feature selection.
As shown in Table~\ref{table:mdc}, our model can obtain more stable results
than GSplit LBI which suffer the ``collinearity" problem. 
\begin{table}
\caption{Comparison between FDR-HS and others on stability (measured by mDC)} 
\footnotesize
\centering
\begin{tabular}{|c|c|c|c|c|c|c|}
\hline
 & T-test & BH$_{q}$ & LocalFDR & FDR-HS & GSplit LBI \\
\hline
mDC$^{(+)}$ (Lesion features) & 0.6705  & 0.6248 & 0.6698 & \textbf{0.6842} & 0.4598 \\
\hline
mDC$^{(-)}$ (Procedural Bias) & 0.6267  & 0.5541 & 0.5127 & \textbf{0.6540} & 0.3033\\
\hline
\end{tabular} 
\label{table:mdc}
\end{table}
\section{Conclusions}
In this paper, a ``two-groups" Empirical-Bayes model is proposed to stably and efficiently select interpretable heterogenous features in voxel-based neuroimage analysis. By modeling prior probability voxel-by-voxel and using a heterogenous regularization, the model can avoid multicollinearity and exploit spatial patterns of features. With experiments on ADNI database, the features selected by our models have better interpretability and prediction power than others. \newline

\setlength{\parindent}{0pt} \textbf{{Acknowledgements.}} This work was supported in part by 973-2015CB351800, NSFC-61625201, 61527804, National Basic Research Program of China (Nos. 2015CB85600, 2012CB825501), NNSF of China (Nos. 61370004, 11421110001), HKRGC grant 16303817, Scientific Research Common Program of Beijing Municipal Commission of Education (No. KM201610025013) and grants from Tencent AI Lab, Si Family Foundation, Baidu BDI and Microsoft Research-Asia.
%
%
\bibliographystyle{splncs03}
\bibliography{fdrhs_refer}

\appendix 
\newpage
\renewcommand\thesection{\Alph{section}}
\begin{center}
    \Large\bf{Supplementary Information}
\end{center}

\section{Multicollinearity problem}

One can implement sparse multivariate models for feature selection and classification by minimizing the penalized optimization function. Particularly, $n^2$GFL \cite{n2gfl} was proposed to stably capture the degenerate voxels by harnessing the sparsity and geometrically clustering properties under the $\beta \geq 0$ constraint and regularization function $\Omega(\beta) = \Vert D\beta \Vert_{1}$ with $D = \left[I; \rho D_{G}\right]$. \footnote{Here $D_{G}\beta = \sum_{(i,j) \in E} \beta_{i} - \beta_{j}$ denotes a graph difference operator on $\bm{G} = (\bm{V},\bm{E})$, where $\bm{V}$ is the node set of voxels, $\bm{E}$ is the edge set of voxel pairs in neighbour (e.g. 3-by-3-by-3).} On the basis of this, GSplit LBI was proposed to capture additional procedural bias via a variable splitting scheme. 

However, such multivariate models suffer from the multicollinearity problem, i.e. high correlation among features, including the following aspects in details: (1) the collinearity between non-nulls and nulls can make the irrepresentable condition of genlasso which ensures the successfully recovery of the true support set \cite{vaiter} hard to satisfy. Such problem can violate the non-nulls to be selected; (2) the collinearity among non-nulls of sparsity model such as lasso/genlasso tends to only select one feature/region among correlated ones. (3) the procedural bias may be represented by other variables which are highly correlated with them due to the ``collinearity" between non-nulls and nulls during minimization of General Linear Model (GLM). Specifically, the general penalized optimization function of the multivariate model is:
\begin{align}
(\beta,\beta_{0}) = \arg_{(\beta,\beta_{0})} \min \ell(\beta,\beta_{0}) + \lambda \Omega(\beta) 
\label{eq:glm}
\end{align}
where $\ell(\beta,\beta_{0})  = \frac{1}{N}\sum_{i=1}^{N} \log\left( 1 + e^{y_{i} \cdot (\beta_{0} + x_{i}^T\beta) } \right) - y_{i} \cdot (\beta_{0} + x_{i}^T\beta)$
with $\beta_{0}$ denoting the bias parameter and $\Omega(\beta)$ is the regularization function. Different choice of $\Omega(\beta)$ leads to different model.

Since GSplit LBI and genlasso enforce same sparsity regularizations on lesion features, we firstly discuss the problem of genlasso under high dimensional space: (1) the irrepresentable condition which ensures the model selection consistency (i.e. successfully recover the true support set) is not easy to satisfy (2) only select one feature/cluster among correlated ones. 

To understand (1), note in \cite{vaiter} that under linear model the necessary condition for genlasso to satisfy model selection consistency is that $IC_{1} < 1$. When $D = I$, the slightly stronger version of this condition is $\Vert X_{S^c}^TX_{S}\left( X_{S}^TX_{S} \right)^{-1} \Vert_{1} < 1$ where $S$ denotes the true support set (lesion voxels). Such irrepresentable condition implies the ``decorrelated" property of $S$ and $S^c$. However, such condition is hard to satisfy under high dimensional space since covariates are more easier to be correlated. 

To see (2), note in \cite{elasticnet} that the lasso only selects one feature among a group of correlated ones. We claim that genlasso also suffers from this limitation. Specifically, it can be shown in Lemma~\ref{lemma:correlated} that the Total Variation regularization of genlasso tends to select only single region among a group of correlated ones. 
\begin{lemma}
\label{lemma:correlated}
Let $\bm{G}_{S_1} = (\bm{V}_{S_{1}},\bm{E}_{S_2})$ and $\bm{G}_{S_2} = (\bm{V}_{S_{2}},\bm{E}_{S_1})$ denotes two subgraphs (regions) that are not connected with other nodes in $\bm{V}$, i.e.
\begin{align}
D_{G} = \left[ \begin{array}{ccc} D_{S_{1}} & & \\ & D_{S_{2}} & \\ & & D_{S_1^c \cap S_2^c} \end{array} \right] \nonumber
\end{align} 
where $S_{1} \subset \{1,...,p\}$, $S_{2} \subset \{1,...,p\}$, $|S_1| = |S_2|$ and $S_{1} \cap S_{2} = \emptyset$. If $D_{S_1} = D_{S_2}$, $X_{S_1} = X_{S_2}$ and $\hat{\beta}$ is the solution of~\ref{eq:glm} with $\Omega{\beta} = \Vert D_{G} \beta \Vert_{1}$, then $\hat{\beta}^{\star}$ is the other solution where 
\begin{align}
\hat{\beta}^{\star}_{k} = 
\begin{cases}
\hat{\beta}_{k} & k \in S_{1}^c \cap S_{2}^c \\
\left(\hat{\beta}_{k} + \hat{\beta}_{k(S_{2})}\right) \cdot s & k \in S_{1} \\
\left(\hat{\beta}_{k(S_1)} + \hat{\beta}_{k}\right) \cdot (1-s) & k \in S_{2}
\end{cases} \nonumber
\end{align}
for any $s \in [0,1]$, where $k(S_{t=1,2})$ are indexes such that $X_{k} = X_{k(S_{t=1,2})}$ for $k \in S_{t=2,1}$. 
\end{lemma}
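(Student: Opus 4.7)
The plan is to verify that $\hat\beta^\star$ is another minimizer of~\eqref{eq:glm} by showing two things: (i) the data-fit term $\ell(\beta,\beta_{0})$ is preserved, i.e.\ $X\hat\beta^\star = X\hat\beta$, and (ii) the penalty $\Omega(\hat\beta^\star) = \Vert D_{G}\hat\beta^\star\Vert_{1}$ does not increase. Since $\hat\beta$ is assumed to be a minimizer and the objective at $\hat\beta^\star$ is no larger, optimality of $\hat\beta$ forces equality throughout, and $\hat\beta^\star$ must also be a minimizer.

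For step (i), I would fix the bijection $k \in S_{1} \leftrightarrow k(S_{2}) \in S_{2}$ (and its inverse $k \in S_{2} \leftrightarrow k(S_{1}) \in S_{1}$) guaranteed by $X_{S_{1}}=X_{S_{2}}$, so that $X_{k} = X_{k(S_{2})}$ for every $k \in S_{1}$. The columns of $X$ on $S_{1}^{c}\cap S_{2}^{c}$ multiply the same coefficients in both $\hat\beta$ and $\hat\beta^\star$, so I only need to track the $S_{1}\cup S_{2}$ contribution. By the definition of $\hat\beta^\star$, for each matched pair $(k,k(S_{2}))$ the sum $\hat\beta^\star_{k}+\hat\beta^\star_{k(S_{2})} = (\hat\beta_{k}+\hat\beta_{k(S_{2})})\bigl(s + (1-s)\bigr) = \hat\beta_{k}+\hat\beta_{k(S_{2})}$. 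Since $X_{k}=X_{k(S_{2})}$, collapsing the two columns shows that this sum is all that enters $X\beta$, so $X\hat\beta^\star = X\hat\beta$, and hence $\ell(\hat\beta^\star,\beta_{0}) = \ell(\hat\beta,\beta_{0})$.

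For step (ii), I would use the block-diagonal structure of $D_{G}$ to split the penalty into three independent pieces corresponding to $S_{1}$, $S_{2}$, and $S_{1}^{c}\cap S_{2}^{c}$. The third piece is unchanged. Reparametrizing by $\alpha_{k}:=\hat\beta_{k}$ and $\gamma_{k}:=\hat\beta_{k(S_{2})}$ for $k\in S_{1}$, the restricted coefficients satisfy $\hat\beta^\star_{S_{1}} = s(\alpha+\gamma)$ and $\hat\beta^\star_{S_{2}} = (1-s)(\alpha+\gamma)$ (with the $S_{2}$ index identified with $S_{1}$ through the bijection). Using $D_{S_{1}}=D_{S_{2}}$, the $S_{1}$ and $S_{2}$ pieces of $\Vert D_{G}\hat\beta^\star\Vert_{1}$ sum to $\Vert D_{S_{1}}(\alpha+\gamma)\Vert_{1}$, which by the triangle inequality is bounded by $\Vert D_{S_{1}}\alpha\Vert_{1} + \Vert D_{S_{1}}\gamma\Vert_{1} = \Vert D_{S_{1}}\hat\beta_{S_{1}}\Vert_{1} + \Vert D_{S_{2}}\hat\beta_{S_{2}}\Vert_{1}$, the corresponding contribution at $\hat\beta$. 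Adding back the common third block gives $\Vert D_{G}\hat\beta^\star\Vert_{1}\le \Vert D_{G}\hat\beta\Vert_{1}$.

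Combining (i) and (ii) yields $\ell(\hat\beta^\star,\beta_{0})+\lambda\Omega(\hat\beta^\star)\le \ell(\hat\beta,\beta_{0})+\lambda\Omega(\hat\beta)$, and optimality of $\hat\beta$ delivers equality, so $\hat\beta^\star$ is also a solution of~\eqref{eq:glm} for every $s\in[0,1]$. The only non-trivial step is the bookkeeping through the bijection $k\leftrightarrow k(S_{2})$; the rest is a one-line triangle inequality. I expect the main obstacle to be purely notational, namely keeping straight which index set each coefficient is viewed in when invoking $D_{S_{1}}=D_{S_{2}}$ and $X_{S_{1}}=X_{S_{2}}$.
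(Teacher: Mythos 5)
Your proof is correct and is exactly the direct verification the paper gestures at — its entire ``proof'' is the single line ``It can be easily verified from the definition of~\eqref{eq:glm},'' so your two steps (loss invariance because the matched columns collapse onto the preserved sum $\hat\beta_k+\hat\beta_{k(S_2)}$, and penalty non-increase via absolute homogeneity with $s,1-s\ge 0$ plus the triangle inequality on the block-diagonal penalty) supply precisely the bookkeeping the authors omit. The only implicit assumption you share with the lemma itself is that the index bijection $k\leftrightarrow k(S_2)$ realizing $X_{S_1}=X_{S_2}$ is the same one under which $D_{S_1}=D_{S_2}$; granting that, the argument is complete.
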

\begin{proof} 
It can be easily verified from the definition of~\ref{eq:glm}.
\end{proof}

Now we discuss the problem of the procedural bias in GSplit LBI, i.e. the procedural bias may be represented by other variables which are highly correlated with them due to the ``collinearity". Note that to select procedural bias, GSplit LBI adopts an variable splitting scheme $\Vert D\beta - \gamma \Vert_{2}^2$ by introducing an augmented variable $\gamma$, the loss function is redefined as:
\begin{align}
\ell(\beta_{0},\beta,\gamma) = \ell(\beta_{0},\beta) + \frac{1}{2\nu} \Vert D\beta - \gamma \Vert_{2}^2 
\label{eq:split}
\end{align}
and implement it by the following iterative algorithm:
\begin{subequations}
    \label{eq:slbi-show}
    \begin{align}
    	\label{eq:slbi-show-a}
        \beta_0^{k+1} &= \beta_0^k - \kappa \alpha \nabla_{\beta_{0}} \ell(\beta_{0}^k, \beta^k,\gamma^k), \\
        \label{eq:slbi-show-b}
        \beta^{k+1} &= \beta^k - \kappa \alpha \nabla_{\beta} \ell(\beta_0^k,\beta^k,\gamma^k), \\
        \label{eq:slbi-show-c}
        z^{k+1} &= z^k - \alpha \nabla_\gamma \ell(\beta_0^k,\beta^k,\gamma^k), \\
        \label{eq:slbi-show-d}
        \gamma^{k+1}_{\bm{V}} &= \kappa \cdot \max(z_{\bm{V}}^{k+1} - 1,0), \\
        \label{eq:slbi-show-e}
        \gamma^{k+1}_{\bm{G}} &= \kappa \cdot \mathrm{sign} \max(|z_{\bm{G}}^{k+1}| - 1,0), \\
        \label{eq:slbi-show-f}
        \beta_{les}^{k+1} &= (I - D_{S_{k+1}^c}^{\dagger}D_{S_{k+1}^c})\beta^{k+1},
    \end{align}
\end{subequations}
where $S_{k} := \mathrm{supp}(\gamma^k)$. Note that in~\ref{eq:slbi-show-f}, the $\beta_{les}$,  which is the projection of $\beta$ onto the support set of $\gamma$, is the estimator to capture lesion features. The elements with comparably large magnitude among the remainder of such projection are regarded as procedural bias. However, under high dimensional data, such definition may suffer multicollinearity problem that the procedural bias are ``submerged" or represented by other null variables that have high correlation with them. In detail, note that the procedural bias are less clustered than lesion features and also that in~\ref{eq:slbi-show-b}, when $\kappa \to \infty$, $\alpha \to 0$, we have $\beta^{k+1} \to \arg \min_{\beta} \ell(\beta_0^k,\beta,\gamma^{k})$, i.e. minimizing GLM model, it can then be shown in the following lemma that the algorithm may choose null variables while the procedural bias may not be successfully recovered.
\begin{lemma}
Denotes the set of index of procedural bias as $S_{pro}$. Assume $i \in S_{pro}$ and $\{j_{1},...,j_{m}\} \subset S^c_{pro}$ are isolated points satisfying (1) $\gamma^{k}_{\{i,j_{1},...,j_{m}\}} = 0$ and (2) $X_{i} = \sum_{k=1}^{m} X_{j_{k}}s_{k}$ for some $\{s_{1},...,s_{m}\}$ such that $\sum_{k=1}^{m}s_{k} = 1$ and $s_{k} > 0$ for $k \in \{1,..,m\}$. If $\hat{\beta}^{\star}$ minimizes $\ell(\beta_{0}^k,\beta,\gamma^k)$, then $\hat{\beta}^{\star}_{i} = \sum_{k=1}^m \hat{\beta}^{\star}_{j_k} s_{j_k}$, which means $\left| \hat{\beta}^{\star}_{i} \right| \leq \max\{\left| \hat{\beta}^{\star}_{j_{k}} \right|\}_{k=1,...,m}$. Furthermore, if $\left| \hat{\beta}^{\star}_{j_{1}} \right| = \left| \hat{\beta}^{\star}_{j_{2}} \right| = ... = \left| \hat{\beta}^{\star}_{j_{m}} \right|$ does not hold, then $\left| \hat{\beta}^{\star}_{i} \right| < \max\left\{\left| \hat{\beta}^{\star}_{j_{k}} \right|\right\}_{k=1,...,m}$.
\label{lemma:bias}
\end{lemma}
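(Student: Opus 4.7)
The plan is to exhibit a one-parameter family of perturbations of $\hat\beta^\star$ that leaves the logistic part of $\ell(\beta_0^k,\beta,\gamma^k)$ invariant while changing only the quadratic splitting penalty, and then invoke first-order optimality at $\hat\beta^\star$ to pin down the stated linear relation.

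First I would define, for $\delta\in\mathbb{R}$, the perturbation $\beta(\delta)=\hat\beta^\star+\delta v$, where $v$ is supported on $\{i,j_1,\dots,j_m\}$ with $v_i=-1$ and $v_{j_k}=s_k$. By assumption (2), $Xv=-X_i+\sum_k s_k X_{j_k}=0$, so $X\beta(\delta)=X\hat\beta^\star$ for every $\delta$. Hence the logistic contribution $\ell(\beta_0^k,\beta(\delta))$ is constant in $\delta$, and in particular its directional derivative in the direction $v$ vanishes at $\delta=0$. Next I would evaluate the splitting penalty $\frac{1}{2\nu}\|D\beta-\gamma^k\|_2^2$ along the family. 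By assumption (1) the identity-block components of $\gamma^k$ at $\{i,j_1,\dots,j_m\}$ are zero, and because these indices are isolated in $\bm{G}$ they do not appear in any row of $D_G$; hence the penalty restricted to these coordinates separates as $\tfrac{1}{2\nu}\bigl(\beta_i^2+\sum_k\beta_{j_k}^2\bigr)$ plus terms independent of $\beta_i,\beta_{j_k}$. Substituting $\beta(\delta)$ and differentiating at $\delta=0$ yields $\tfrac{1}{\nu}\bigl(-\hat\beta^\star_i+\sum_k s_k\hat\beta^\star_{j_k}\bigr)$.

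Setting the full directional derivative to zero by the optimality of $\hat\beta^\star$ gives $\hat\beta^\star_i=\sum_k s_k\hat\beta^\star_{j_k}$, which is the identity claimed. The first inequality then follows from the triangle inequality and $\sum_k s_k=1$, $s_k>0$: $|\hat\beta^\star_i|\le\sum_k s_k|\hat\beta^\star_{j_k}|\le\max_k|\hat\beta^\star_{j_k}|$. For the strict form, observe that with all weights $s_k$ strictly positive the bound $\sum_k s_k|\hat\beta^\star_{j_k}|\le\max_k|\hat\beta^\star_{j_k}|$ is an equality if and only if the values $|\hat\beta^\star_{j_k}|$ are all equal; negating this hypothesis yields the strict inequality $|\hat\beta^\star_i|<\max_k|\hat\beta^\star_{j_k}|$.

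The main obstacle is the separation argument in the second step: one has to be sure that the ``isolated points'' hypothesis, together with $\gamma^k_{\{i,j_1,\dots,j_m\}}=0$, really decouples the coordinates $\{i,j_1,\dots,j_m\}$ from the rest of the $\beta$-vector inside $\|D\beta-\gamma^k\|_2^2$. If any of these indices shared an edge in $D_G$ with some coordinate that does not move under $v$, a cross term of the form $\rho(\beta_i-\beta_\ell)$ would contribute to the $\delta$-derivative and corrupt the clean relation $\hat\beta^\star_i=\sum_k s_k\hat\beta^\star_{j_k}$. Under the natural reading of ``isolated'' as ``incident to no edge of $\bm{E}$ active in the variation'', this is exactly the technical hypothesis needed, after which the rest of the argument is just the triangle inequality.
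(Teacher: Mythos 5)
Your proof is correct and follows essentially the same route as the paper's: exploit the collinearity $X_i=\sum_k s_k X_{j_k}$ to build a one-parameter family along which the GLM loss is constant, use the isolated-points and $\gamma^k_{\{i,j_1,\dots,j_m\}}=0$ hypotheses to reduce the splitting penalty $\tfrac{1}{2\nu}\Vert D\beta-\gamma^k\Vert_2^2$ to a ridge term $\tfrac{1}{2\nu}\Vert\beta_{\mathcal{I}}\Vert_2^2$ on those coordinates, apply first-order optimality to get $\hat{\beta}^{\star}_{i}=\sum_k s_k\hat{\beta}^{\star}_{j_k}$, and finish with convexity of $|\cdot|$. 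Your only (minor, beneficial) deviation is parameterizing the perturbation as a full line $\hat{\beta}^{\star}+\delta v$ with $\delta\in\mathbb{R}$ rather than the paper's segment $\hat{\beta}^{\alpha}$, $\alpha\in[0,1]$, ending at $\hat{\beta}^{\star}$, which makes the stationarity condition two-sided and the derivation of the identity slightly cleaner.
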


\begin{proof}
Since $\hat{\beta}^{\star}$ minimizes $\ell(\beta^k_{0},\beta,\gamma^k)$, then it's easy to see that
\begin{align}
\hat{\beta}^{\alpha}_{t} = \begin{cases} \alpha \hat{\beta}^{\star}_{i} & t = i \\ (1-\alpha) s_{j_k} \hat{\beta}^{\star}_{i} + \hat{\beta}^{\star}_{s_{j_k}} & t = s_{j_k} (k = 1,...,m) \\ \hat{\beta}^{\star}_{t} & t \notin \{i,j_{1},....,j_{m}\} \end{cases}  \nonumber
\end{align} 
satisfies that $\ell(\beta^k_{0},\hat{\beta}^{\alpha}) = \ell(\beta^k_{0},\hat{\beta}^{\star})$ 
for any $\alpha \in [0,1]$. Denote $\mathcal{I} = \{i,j_{1},...,j_{m}\}$. Since $\mathcal{I}$ corresponds to isolate points and $\gamma^{k}_{\mathcal{I}} = 0$, then 
\begin{align}
\arg \min_{\beta_{\mathcal{I}}}(\beta_{0}^k,\beta_{\mathcal{I}},\hat{\beta}_{\mathcal{I}^c}^{\star},\gamma^k) \iff \arg\min_{\beta_{\mathcal{I}}} \ell(\beta_{0}^k,\beta_{\mathcal{I}},\hat{\beta}_{\mathcal{I}^c}^{\star}) + \frac{1}{2\nu} \Vert \beta_{\mathcal{I}} \Vert_{2}^2 \nonumber
\end{align}
By computing the gradient of $\Vert \hat{\beta}^{\alpha} \Vert_{2}^2$ over $\alpha$, it's easy to see that $\hat{\beta}_{\mathcal{I}}^{\star} = \hat{\beta}^{1}_{\mathcal{I}}$ $ = \arg\min_{\beta_{\mathcal{I}}} \ell(\beta_{0}^k,\beta_{\mathcal{I}},\hat{\beta}_{\mathcal{I}^c}^{\star}) + \frac{1}{2\nu} \Vert \beta_{\mathcal{I}} \Vert_{2}^2$ only if $\hat{\beta}^{\star}_{i} = \sum_{k=1}^m \hat{\beta}^{\star}_{j_k} s_{j_k}$. Since $\left| \cdot \right|$ is an convex function and $\hat{\beta}^{\star}_{i}$ is a convex combination of $\{\hat{\beta}^{\star}_{j_k}\}_{k=1,...,m}$, we then have
\begin{align}
\left| \hat{\beta}^{\star}_{i} \right| \leq \sum_{k=1}^{m} s_{k} \left| \hat{\beta}^{\star}_{j_{k}} \right| \leq \max\left\{\left| \hat{\beta}^{\star}_{j_{k}} \right|\right\}_{k=1,...,m} \nonumber 
\end{align}
The last inequality can be dropped if $\left| \hat{\beta}^{\star}_{j_{1}} \right| = \left| \hat{\beta}^{\star}_{j_{2}} \right| = ... = \left| \hat{\beta}^{\star}_{j_{m}} \right|$ does not hold.

\end{proof}

\section{Heterogenous Features}

We mentioned that the procedural bias and lesion features are heterogenous in terms of volumetric change (enlarged v.s. atrophied) and spatial patterns (surroundingly distributed v.s. spatially cohesive). The heterogeneity in terms of volumetric change is easy to understand, since the procedural bias commonly refer to enlarged GM voxels in voxel-based dementia analysis and lesion features refer to atrophied ones. To illustrate the heterogeneous levels of spatial coherence, we evaluate the edge density in 3D coordinate system by introducing 3D edge density (3dED) measurement for both selected lesions and procedural bias. In detail, $3dED$ is defined as:
\begin{align}
\label{eq:3ded}
3dED^{\pm} = \frac{1}{K} \sum_{k=1}^{K} \frac{|\{(i,j) \in E: i \in S^{\pm}_{k}, j \in S^{\pm}_{k}\}|}{\max_{\widehat{\bm{G}_{3d}}}\left\{\left|\widehat{\bm{E}}\right|: \widehat{\bm{G}_{3d}} = \left(\widehat{\bm{V}},\widehat{\bm{E}}\right), \ \left| \widehat{\bm{V}} \right| = \left| \{i: i \in S^{\pm}_{k}\} \right| \right\}} 
\end{align}
where $S^{\pm}_{k}$ denote the support set of lesion features and procedural bias in the k-th fold, respectively. For each fold $k$, we need to compute:
\begin{equation*}
\max_{\widehat{\bm{G}_{3d}}}\left\{\left|\widehat{\bm{E}}\right|: \widehat{\bm{G}_{3d}} = \left(\widehat{\bm{V}},\widehat{\bm{E}}\right), \ \left| \widehat{\bm{V}} \right| = \left| \{i: i \in S^{\pm}_{k}\} \right| \right\}
\end{equation*}
For the graphs with $p$ voxels which are embedded in 3-d coordinate space, the maximum number of edges is equal to:
\begin{equation*}
\max_{(c_{1},c_{2},c_{3},s_{1},s_{2},r) \in \mathcal{S}^{3d}_{p}}\{3c_{1}c_{2}c_{3} - c_{1}c_{2} - c_{1}c_{3} - c_{2}c_{3} + 2r_{1}r_{2} - r_{1} - r_{2} + l\}
\end{equation*}
where
\begin{align}
\mathcal{S}^{3d}_{p} & = \{\left[c_{1},c_{2},c_{3},r_{1},r_{2},l\right] \in \mathbb{R}_{1\times 6}: c_{1}c_{2}c_{3} + r_{1}r_{2} + l = p, r \leq \max\{r_{1},r_{2}\}, \nonumber \\
& \{r_{1}r_{2} + r \leq c_{1}c_{2}, r_{1}\leq c_{1}, r_{2} \leq c_{2}\} \ \mathrm{or} \ \{r_{1}r_{2} + l \leq c_{1}c_{3}, r_{1}\leq c_{1}, r_{2} \leq c_{3}\}  \nonumber \\
& \mathrm{or} \ \{r_{1}r_{2} + l \leq c_{2}c_{3}, r_{1}\leq c_{2}, r_{2} \leq c_{3}\} \}
  \nonumber 
 \end{align}
where $\mathbb{R}^{+} = \{x \in R, x \geq 0\}$. Besides, the $c_{1},c_{2},c_{3}$ can be taken as length, width and height of a cube, $r_{1}r_{2} + l$ are the remainder of $p$ for a cube $(c_{1},c_{2},c_{3})$. The $r_{1},r_{2}$ can be taken as the length and width of the rectangle which is located on one of the surfaces of $(c_{1},c_{2},c_{3})$. The $l$ is the residual of $p - c_{1}c_{2}c_{3}$ for rectangle $(r_{1},r_{2})$.

According to~\ref{eq:3ded}, the procedural bias are turned to be selected much less clustered than lesion features by all univariate models, e.g. the BH$_{q}$ yields 0.4677 for lesion features and 0.1621 for procedural bias; while localFDR has 0.4662 and 0.1699; FDR-HS has 0.5365 and 0.2535, which validates the heterogenous assumption in terms of the level of spatial coherence.

\section{IDS of ADNI subject used in our experiments}



\small
\begin{center}
\begin{longtable}{c|c|c||c|c|c||c|c|c}
 \toprule
 Subject & ID & Class &  Subject & ID & Class &  Subject & ID & Class  \\
 \midrule
 
  123\textunderscore S\textunderscore0094	 & 	9655	 & 	15AD   &  
137\textunderscore S\textunderscore 0158        &       11127      &      15MCI &  
014\textunderscore S\textunderscore  0558  &     17400          &       15NC \\

123\textunderscore S\textunderscore 0088	 & 	9788	 & 	15AD   &  
128\textunderscore S\textunderscore 0225         &    11179        &     15MCI & 
 021\textunderscore S\textunderscore 0647   &    17668           &       15NC \\

098\textunderscore S\textunderscore 0149	 & 	10146	 & 	15AD  &  136\textunderscore S\textunderscore 0107         &     11227       &       15MCI	 &  
 137 \textunderscore S\textunderscore  0686  &  17813             &       15NC \\

032\textunderscore S\textunderscore 0147	 & 	10404	 & 	15AD  &  032\textunderscore S\textunderscore 0214      &       11280        &       15MCI &    032\textunderscore S\textunderscore 0677   &   17820            &       15NC \\

123\textunderscore S\textunderscore 0162  & 	10962	 & 	15AD   &
005\textunderscore S\textunderscore 0222 &      11299        &       15MCI &
002\textunderscore S\textunderscore  0685  &  18211             &       15NC \\

128\textunderscore S\textunderscore 0216	 & 	11101	 & 	15AD  & 027\textunderscore S\textunderscore 0179  &        11348       &       15MCI &
 094\textunderscore S\textunderscore  0711  &   18589            &       15NC \\
 
128\textunderscore S\textunderscore 0167	 & 	11203	 & 	15AD & 
021\textunderscore S\textunderscore  0231  &        11430       &       15MCI	&
 127\textunderscore S\textunderscore 0684   &    18896           &       15NC \\
 
005\textunderscore S\textunderscore 0221	 & 	11604	 & 	15AD &
007\textunderscore S\textunderscore 0249 &         11544      &       15MCI	&
  033\textunderscore S\textunderscore  0734  &     19155          &       15NC  \\
  
 014\textunderscore S\textunderscore 0328	 & 	12327	 & 	15AD &
098\textunderscore S\textunderscore 0269  &        11615       &       15MCI & 
 033\textunderscore S\textunderscore 0741   &   19258            &       15NC \\
 
007\textunderscore S\textunderscore 0316	 & 	12616	 & 	15AD &
130\textunderscore S\textunderscore 0289 &         11850      &       15MCI	&
  094\textunderscore S\textunderscore  0692  &     19567          &       15NC \\
 
021\textunderscore S\textunderscore 0343	 & 	12979	 & 	15AD & 
021\textunderscore S\textunderscore  0273 &         11942     &       15MCI	&
 009 \textunderscore S\textunderscore  0751  &      20013         &       15NC \\

014\textunderscore S\textunderscore 0356	 & 	13004	 & 	15AD &
007\textunderscore S\textunderscore 0293  &         11982       &       15MCI	&
116\textunderscore S\textunderscore  0648  &     20370          &       15NC \\

032\textunderscore S\textunderscore 0400	 & 	13525	 & 	15AD & 
031\textunderscore S\textunderscore 0294  &        12065      &       15MCI &
 129\textunderscore S\textunderscore 0778   &      20543         &       15NC \\
 
116\textunderscore S\textunderscore 0370	 & 	14122	 & 	15AD & 
021\textunderscore S\textunderscore 0276  &        12092      &       15MCI & 
 029\textunderscore S\textunderscore 0824   &     23213          &       15NC \\
 
127\textunderscore S\textunderscore 0431	 & 	15497	 & 	15AD & 
128\textunderscore S\textunderscore 0227 &         12119      &       15MCI &  
116\textunderscore S\textunderscore  0657  &     23350          &       15NC \\

031\textunderscore S\textunderscore 0554	 & 	15994	 & 	15AD & 
027\textunderscore S\textunderscore 0256  &        12250       &       15MCI &
006\textunderscore S\textunderscore  0731  &    23468           &       15NC \\

128\textunderscore S\textunderscore 0517	 & 	16150	 & 	15AD &
130\textunderscore S\textunderscore 0285   &      12424         &       15MCI & 
029\textunderscore S\textunderscore  0845  &     24249          &       15NC\\

 116\textunderscore S\textunderscore 0487	 & 	16377	 & 	15AD & 
098\textunderscore S\textunderscore 0288   &      12654        &       15MCI & 
 009\textunderscore S\textunderscore  0862  &     25128          &       15NC \\

 002\textunderscore S\textunderscore 0619	 & 	16392	 & 	15AD & 
007\textunderscore S\textunderscore 0344   &      12697        &       15MCI	&  
098\textunderscore S\textunderscore  0896  &    25255           &       15NC \\

131\textunderscore S\textunderscore 0497	 & 	16666	 & 	15AD & 
021\textunderscore S\textunderscore 0332   &     12862          &       15MCI	&  033\textunderscore S\textunderscore  0923  &    25427           &       15NC \\

021\textunderscore S\textunderscore 0642	 & 	17632	 & 	15AD & 
128\textunderscore S\textunderscore 0258  &      13085        &       15MCI  &
130\textunderscore S\textunderscore  0886  &      25455         &       15NC  \\

033\textunderscore S\textunderscore 0739	 & 	19175	 & 	15AD & 
027\textunderscore S\textunderscore 0307  &      13281         &       15MCI	&
006\textunderscore S\textunderscore 0498   &   25790            &       15NC \\

100\textunderscore S\textunderscore 0743	 & 	19585	 & 	15AD & 
123\textunderscore S\textunderscore  0390   &     13315          &       15MCI	&
052\textunderscore S\textunderscore  0951  &   26642            &       15NC \\

033\textunderscore S\textunderscore 0724	 & 	19772	 & 	15AD &
031\textunderscore S\textunderscore 0351  &     13783          &       15MCI	&
130\textunderscore S\textunderscore  0969  &    26688           &       15NC \\

128\textunderscore S\textunderscore 0740	 & 	19990	 & 	15AD & 
021\textunderscore S\textunderscore  0424  &     13909         &       15MCI	&
021\textunderscore S\textunderscore  0984  &     27056          &       15NC \\

021\textunderscore S\textunderscore 0753	 & 	20169	 & 	15AD & 
053\textunderscore S\textunderscore  0389  &     13938          &       15MCI	&
024\textunderscore S\textunderscore  0985  &     27607          &       15NC \\

137\textunderscore S\textunderscore 0796	 & 	23112	 & 	15AD &
094\textunderscore S\textunderscore  0434  &   13964           &       15MCI	&  
024\textunderscore S\textunderscore  1063  &     28111          &       15NC \\

029\textunderscore S\textunderscore 0836	 & 	23231	 & 	15AD &
068\textunderscore S\textunderscore  0401  &     14161          &       15MCI	&  033\textunderscore S\textunderscore  1098  &    30304           &       15NC  \\

100\textunderscore S\textunderscore 0747	 & 	23581	 & 	15AD &
131\textunderscore S\textunderscore 0409  &      14240        &       15MCI	&
  010\textunderscore S\textunderscore  0472  &     30481          &       15NC \\

127\textunderscore S\textunderscore 0754	 & 	23787	 & 	15AD &
116\textunderscore S\textunderscore 0361  &       14296       &       15MCI	&
 137\textunderscore S\textunderscore 0972   &     31702          &       15NC \\

012\textunderscore S\textunderscore 0803	 & 	24863	 & 	15AD & 
132\textunderscore S\textunderscore 0339  &       14367        &       15MCI	& 
  033\textunderscore S\textunderscore  1086  &     32054          &       15NC \\

033\textunderscore S\textunderscore 0889	 & 	25026	 & 	15AD &
037\textunderscore S\textunderscore 0377  &       14405       &       15MCI	 &
130\textunderscore S\textunderscore  1200  &     36281          &       15NC \\

126\textunderscore S\textunderscore 0891	 & 	25172	 & 	15AD & 
027\textunderscore S\textunderscore  0485  &      14928         &       15MCI	& 
116\textunderscore S\textunderscore 1232   &     37848          &       15NC \\

005\textunderscore S\textunderscore 0929	 & 	25645	 & 	15AD & 
027\textunderscore S\textunderscore  0408  &     14964          &       15MCI	&  027\textunderscore S\textunderscore  0120  &   10933            &       15NC \\

006\textunderscore S\textunderscore 0547	 & 	25816	 & 	15AD & 
137\textunderscore S\textunderscore 0481  &    15044          &       15MCI	 &
 068\textunderscore S\textunderscore  0127  &   11133            &       15NC \\

002\textunderscore S\textunderscore 0955	 & 	26170	 & 	15AD & 
027\textunderscore S\textunderscore  0417  &    15148          &       15MCI	&
068\textunderscore S\textunderscore 0210   &      11235         &       15NC \\

130\textunderscore S\textunderscore 0956	 & 	27032	 & 	15AD & 
053\textunderscore S\textunderscore  0507 &    15315           &       15MCI	&
136\textunderscore S\textunderscore  0186  &     11335          &       15NC \\

053\textunderscore S\textunderscore 1044	 & 	27782	 & 	15AD & 
094\textunderscore S\textunderscore  0531  &     15431          &       15MCI	&
009\textunderscore S\textunderscore 0842 & 24339 & 15NC  \\

133\textunderscore S\textunderscore 1055	 & 	29381	 & 	15AD & 
033\textunderscore S\textunderscore  0567  &      15459         &       15MCI	&   029\textunderscore S\textunderscore 0843& 24406 & 15NC \\

100\textunderscore S\textunderscore 1062	 & 	29579	 & 	15AD & 
127\textunderscore S\textunderscore  0394 &    15510           &       15MCI &
 032\textunderscore S\textunderscore 1169 & 34067 & 15NC \\

029\textunderscore S\textunderscore 1056	 & 	30618	 & 	15AD & 
033\textunderscore S\textunderscore  0514  &     15605          &       15MCI &
018\textunderscore S\textunderscore 0055 &  9136 & 15NC	 \\ 

029\textunderscore S\textunderscore 0999	 & 	31239	 & 	15AD & 
033\textunderscore S\textunderscore  0513  &   15622           &       15MCI	&
100\textunderscore S\textunderscore   0015 &      8390         &       30NC  \\

006\textunderscore S\textunderscore 0653	 & 	31252	 & 	15AD & 
130\textunderscore S\textunderscore  0460  &   15711            &       15MCI	& 136\textunderscore S\textunderscore  0196  &     14236          &       30NC \\

014\textunderscore S\textunderscore 1095	 & 	31576	 & 	15AD & 
098\textunderscore S\textunderscore  0542  &   15848            &       15MCI &
136\textunderscore S\textunderscore   0086 &     14712          &       30NC \\

094\textunderscore S\textunderscore 1090	 & 	31678	 & 	15AD &
007\textunderscore S\textunderscore 0414 &   15875           &       15MCI  &
018\textunderscore S\textunderscore  0369  &       15110        &       30NC \\

021\textunderscore S\textunderscore 1109	 & 	31784	 & 	15AD & 
031\textunderscore S\textunderscore  0568 &   15885            &       15MCI	&
131\textunderscore S\textunderscore  0441  &     15959          &       30NC \\

024\textunderscore S\textunderscore 1171	 & 	35190	 & 	15AD & 
037\textunderscore S\textunderscore  0501  &    15916           &       15MCI  & 
032\textunderscore S\textunderscore  0479  &   16652            &       30NC \\

133\textunderscore S\textunderscore 1170	 & 	35211	 & 	15AD & 
037\textunderscore S\textunderscore  0552  &    15970           &       15MCI	& 018\textunderscore S\textunderscore 0425   &     17168          &       30NC \\

031\textunderscore S\textunderscore 1209	 & 	36178	 & 	15AD & 
130\textunderscore S\textunderscore  0423  &    16196           &       15MCI	&
 126\textunderscore S\textunderscore 0405   &     17177          &       30NC \\

130\textunderscore S\textunderscore 1201	 & 	36269	 & 	15AD & 
014\textunderscore S\textunderscore  0557  &   16304            &       15MCI &
005\textunderscore S\textunderscore  0553  &      17619         &       30NC \\

027\textunderscore S\textunderscore 1081	 & 	37145	 & 	15AD & 
033\textunderscore S\textunderscore 0511   &   16314            &       15MCI	&  
126\textunderscore S\textunderscore  0605  &    17639           &       30NC \\

126\textunderscore S\textunderscore 1221	 & 	37339	 & 	15AD & 
130\textunderscore S\textunderscore 0449   &  16351             &       15MCI	&
005\textunderscore S\textunderscore 0602   &   19615            &       30NC \\

029\textunderscore S\textunderscore 1184	 & 	37350	 & 	15AD & 
027\textunderscore S\textunderscore 0461   &  16467           &       15MCI	 &      012\textunderscore S\textunderscore  1009  &    28962           &       30NC \\

027\textunderscore S\textunderscore 1254	 & 	37859	 & 	15AD &  128\textunderscore S\textunderscore 0608   &  16503           &       15MCI	 &  
012\textunderscore S\textunderscore  1212  &   37403            &       30NC \\

130\textunderscore S\textunderscore 1290	 & 	38395	 & 	15AD & 
128\textunderscore S\textunderscore 0611   & 16766              &       15MCI	 &  007\textunderscore S\textunderscore 1206   & 37761              &       30NC  \\

033\textunderscore S\textunderscore 1285	 & 	38593	 & 	15AD & 
053\textunderscore S\textunderscore 0621   &  16864            &       15MCI	& 
 068\textunderscore S\textunderscore 1191   &   38370            &       30NC \\

033\textunderscore S\textunderscore 1283	 & 	38617	 & 	15AD & 
037\textunderscore S\textunderscore  0566  & 16886             &       15MCI	& 
007\textunderscore S\textunderscore  1222  &   38482            &       30NC \\

033\textunderscore S\textunderscore 1308	 & 	40114	 & 	15AD & 
037\textunderscore S\textunderscore 0539  & 17018              &       15MCI	&
 094\textunderscore S\textunderscore 1241   &    41449           &       30NC \\

024\textunderscore S\textunderscore 1307	 & 	41527	 & 	15AD &
137\textunderscore S\textunderscore 0443  & 17030             &       15MCI	 &
002\textunderscore S\textunderscore  1261  &     41799          &       30NC \\

007\textunderscore S\textunderscore 1339	 & 	42344	 & 	15AD &  005\textunderscore S\textunderscore  0546    & 17056              &       15MCI &  002\textunderscore S\textunderscore 1280   &      41806         &       30NC  \\

130\textunderscore S\textunderscore 1337	 & 	42930	 & 	15AD & 
137\textunderscore S\textunderscore 0631  &    17109          &       15MCI	 &
 052\textunderscore S\textunderscore  1251  &      43812         &       30NC \\

127\textunderscore S\textunderscore 1382	 & 	45060	 & 	15AD & 
027\textunderscore S\textunderscore 0644   &   17157           &       15MCI  &
100\textunderscore S\textunderscore 1286   &      45761         &       30NC \\

094\textunderscore S\textunderscore 1397	 & 	51790	 & 	15AD & 
133\textunderscore S\textunderscore 0629  &     17596          &       15MCI	 &   094\textunderscore S\textunderscore 1267   &    46457           &       30NC \\

094\textunderscore S\textunderscore 1402	 & 	54220	 & 	15AD &
021\textunderscore S\textunderscore 0626   &   17687            &       15MCI	 &  131\textunderscore S\textunderscore  1301  &     49328          &       30NC \\

136\textunderscore S\textunderscore 0299        &      15181       &      30AD & 
098\textunderscore S\textunderscore 0667  &  17702             &       15MCI & 
098\textunderscore S\textunderscore  4003  &     224603          &       30NC  \\

136\textunderscore S\textunderscore 0426        &      16172       &      30AD &
052\textunderscore S\textunderscore 0671  &    17849            &       15MCI & 
098\textunderscore S\textunderscore  4018  &    228788           &       30NC \\

018\textunderscore S\textunderscore 0335        &      16560       &      30AD & 
014\textunderscore S\textunderscore 0563   &    17876           &       15MCI &    031\textunderscore S\textunderscore  4021  &    229148           &       30NC\\

136\textunderscore S\textunderscore 0300        &      16719       &      30AD & 
007\textunderscore S\textunderscore 0698  &    18363           &       15MCI & 
012\textunderscore S\textunderscore  4026  &     238532          &       30NC\\

018\textunderscore S\textunderscore 0633        &      19093       &      30AD &
133\textunderscore S\textunderscore 0638   &    18672           &       15MCI &
098\textunderscore S\textunderscore  4050  &     238615          &       30NC  \\

012\textunderscore S\textunderscore 0689        &      19210       &      30AD &  033\textunderscore S\textunderscore 0723  &     19014          &       15MCI &    016\textunderscore S\textunderscore 4097  &     243556          &       30NC \\

126\textunderscore S\textunderscore 0606        &      20487       &      30AD & 
032\textunderscore S\textunderscore  0718  &     19035          &       15MCI &
 016\textunderscore S\textunderscore 4952   &    337793           &       30NC \\

131\textunderscore S\textunderscore 0691        &      20681       &      30AD &
126 \textunderscore S\textunderscore 0708  &     19089         &       15MCI & 
016\textunderscore S\textunderscore 4121   &      246002         &       30NC  \\

005\textunderscore S\textunderscore 0814        &      24734       &      30AD &
128\textunderscore S\textunderscore  0715  &     19225          &       15MCI & 
 006\textunderscore S\textunderscore 4150   &    249403           &       30NC  \\

002\textunderscore S\textunderscore 0816        &      25405       &      30AD & 
033\textunderscore S\textunderscore  0725  &    19404           &       15MCI & 
 127\textunderscore S\textunderscore  4148  &    250137           &       30NC \\

127\textunderscore S\textunderscore 0844        &      29230       &      30AD & 
137\textunderscore S\textunderscore 0669  &     19419          &       15MCI & 
   003\textunderscore S\textunderscore  4119  &    250894           &       30NC \\

002\textunderscore S\textunderscore 1018        &      33832       &      30AD & 
116\textunderscore S\textunderscore 0649  &    19516           &       15MCI & 
127\textunderscore S\textunderscore  4198  &      254320         &       30NC\\

031\textunderscore S\textunderscore 4024        &    228879       &      30AD &
130\textunderscore S\textunderscore 0505  &   19701            &       15MCI &
 002\textunderscore S\textunderscore  4213  &     254582          &       30NC \\

016\textunderscore S\textunderscore 4009        &    240946       &      30AD &
137\textunderscore S\textunderscore 0722  &    19707           &       15MCI & 
 031\textunderscore S\textunderscore  4218  &    255978           &       30NC \\

094\textunderscore S\textunderscore 4089        &    242719       &      30AD &  126\textunderscore S\textunderscore  0709   &    19754           &       15MCI & 002\textunderscore S\textunderscore 4225   &    257270           &       30NC\\

006\textunderscore S\textunderscore 4153        &    248517       &      30AD & 
128\textunderscore S\textunderscore 0770  &    19907           &       15MCI &
 002\textunderscore S\textunderscore  4262  &   259653            &       30NC \\

003\textunderscore S\textunderscore 4136        &    250173       &      30AD & 
014\textunderscore S\textunderscore 0658  &   20003            &       15MCI & 
 941\textunderscore S\textunderscore  4100  &   259781            &       30NC\\

003\textunderscore S\textunderscore 4152        &    253760       &      30AD & 
137\textunderscore S\textunderscore  0668  &    20202           &       15MCI &
 002\textunderscore S\textunderscore 4264   &    259796           &       30NC   \\

098\textunderscore S\textunderscore 4215        &    255843       &      30AD &
137\textunderscore S\textunderscore 0800  &   20500            &       15MCI &
  021\textunderscore S\textunderscore  4276  &    260047           &       30NC \\

098\textunderscore S\textunderscore 4201        &    256178       &      30AD & 
002\textunderscore S\textunderscore  0782  &    20519           &       15MCI &  
 029\textunderscore S\textunderscore  4290  &    260425           &       30NC\\

006\textunderscore S\textunderscore 4192        &    258594       &      30AD &  130\textunderscore S\textunderscore   0783  &   20794            &       15MCI &   098\textunderscore S\textunderscore  4275   &   261459            &       30NC \\

019\textunderscore S\textunderscore 4252        &    258947       &      30AD &  116\textunderscore S\textunderscore  0752 &     23097          &       15MCI & 
094\textunderscore S\textunderscore 4234   &    261531           &       30NC \\

024\textunderscore S\textunderscore 4280        &    261332       &      30AD & 
068\textunderscore S\textunderscore  0802  &     23389          &       15MCI &  018\textunderscore S\textunderscore  4257  &     262076          &       30NC \\

094\textunderscore S\textunderscore 4282        &    261855       &      30AD & 
133\textunderscore S\textunderscore  0792  &      23444         &       15MCI &     136\textunderscore S\textunderscore  4269  &    264215           &       30NC \\

029\textunderscore S\textunderscore 4307        &    267595       &      30AD &  006\textunderscore S\textunderscore  0675  &   23644            &       15MCI &    029\textunderscore S\textunderscore 4279   &   265980            &       30NC \\

016\textunderscore S\textunderscore 4353        &    267937       &      30AD &  031\textunderscore S\textunderscore 0821  &     23658          &       15MCI &     021\textunderscore S\textunderscore  4335  &   266174            &       30NC \\

109\textunderscore S\textunderscore 4378        &    270669       &      30AD &  133\textunderscore S\textunderscore 0771   &    23876           &       15MCI &      130\textunderscore S\textunderscore   4343 &     266217          &       30NC \\

126\textunderscore S\textunderscore 4494        &    281605       &      30AD &  133\textunderscore S\textunderscore  0727   &   23939            &       15MCI &   018\textunderscore S\textunderscore  4349  &     266625          &       30NC \\

127\textunderscore S\textunderscore 4500        &    283515       &      30AD &   027\textunderscore S\textunderscore  0835  &   24138           &       15MCI &
129\textunderscore S\textunderscore  4369  &   267405            &       30NC \\

007\textunderscore S\textunderscore 4568        &    287472       &      30AD &
031\textunderscore S\textunderscore  0830  &   24281            &       15MCI &
 130\textunderscore S\textunderscore   4352 &     267711          &       30NC  \\

006\textunderscore S\textunderscore 4546        &    287994       &      30AD & 
100\textunderscore S\textunderscore  0035  &   8120            &       15NC &  
129\textunderscore S\textunderscore 4371   &     268462          &       30NC\\

130\textunderscore S\textunderscore 4589        &    291219       &      30AD &  
100\textunderscore S\textunderscore 0047   &    8899           &       15NC & 
   018\textunderscore S\textunderscore 4313   &       268930        &       30NC\\

016\textunderscore S\textunderscore 4591        &    292433       &      30AD &
010\textunderscore S\textunderscore 0067   &        9093       &       15NC & 
019\textunderscore S\textunderscore 4367   &      269273         &       30NC   \\

016\textunderscore S\textunderscore 4583        &    294209       &      30AD & 
 018\textunderscore S\textunderscore  0043  &      9324         &       15NC & 
007\textunderscore S\textunderscore 4387   &     269929          &       30NC\\

014\textunderscore S\textunderscore  4615       &    294334       &      30AD &  
100\textunderscore S\textunderscore  0069  &     9417          &       15NC &
036\textunderscore S\textunderscore 4389   &     270462          &       30NC\\

130\textunderscore S\textunderscore 4641        &    295961       &      30AD & 
032\textunderscore S\textunderscore  0095  &     9680          &       15NC &
 003\textunderscore S\textunderscore  4350  &    270999           &       30NC\\

130\textunderscore S\textunderscore 4660        &    300034       &      30AD &
123\textunderscore S\textunderscore 0072   &    9752           &       15NC & 
 129\textunderscore S\textunderscore  4422  &    272184           &       30NC\\

019\textunderscore S\textunderscore 4549        &    300335       &      30AD &  
007\textunderscore S\textunderscore 0070    &     10027          &       15NC &
018\textunderscore S\textunderscore   4399 &    272231           &       30NC \\

126\textunderscore S\textunderscore 4686        &    300818       &      30AD &  
131\textunderscore S\textunderscore   0123 &    10043           &       15NC & 
018\textunderscore S\textunderscore   4399 &    272231           &       30NC \\

005\textunderscore S\textunderscore 4707        &    304663       &      30AD &  
123\textunderscore S\textunderscore  0106  &       10126        &       15NC &    
 021\textunderscore S\textunderscore 4421   &    273564           &       30NC\\

021\textunderscore S\textunderscore 4718        &    304749       &      30AD & 
027\textunderscore S\textunderscore 0118   &        11370       &       15NC & 
029\textunderscore S\textunderscore   4383 &   273993            &       30NC\\
 
018\textunderscore S\textunderscore 4733        &    306069       &      30AD &
 098\textunderscore S\textunderscore 0172   &      11398   &       15NC & 
003\textunderscore S\textunderscore  4441  &    277108           &       30NC \\

130\textunderscore S\textunderscore 4730        &    306384       &      30AD &  
130\textunderscore S\textunderscore   0232 &    11567           &       15NC  & 
 136\textunderscore S\textunderscore  4433  &      278511         &       30NC \\

137\textunderscore S\textunderscore 4756        &    307118       &      30AD &  
 005\textunderscore S\textunderscore  0223  &     11645          &       15NC &
 006\textunderscore S\textunderscore 4449   &    279470           &       30NC\\
 
  027\textunderscore S\textunderscore 4801        &    314034       &      30AD &  
  123\textunderscore S\textunderscore  0113  &     11714          &       15NC  &
031\textunderscore S\textunderscore 4474   &      280369         &       30NC \\

027\textunderscore S\textunderscore 4802        &    317195       &      30AD &
128\textunderscore S\textunderscore  0230  &     11806          &       15NC  & 
 007\textunderscore S\textunderscore 4488   &    281560           &       30NC \\
   
006\textunderscore S\textunderscore 4867        &    322012       &      30AD & 
137\textunderscore S\textunderscore 0283   &    12028           &       15NC &   
 006\textunderscore S\textunderscore 4485   &     281882          &       30NC \\

016\textunderscore S\textunderscore 4887        &    325649       &      30AD & 
128\textunderscore S\textunderscore  0245  &    12242           &       15NC  &   
010\textunderscore S\textunderscore  4345  &   282005            &       30NC\\

 007\textunderscore S\textunderscore 4911        &    328196       &      30AD &  
 128\textunderscore S\textunderscore 0272   &     12313          &       15NC &  
031\textunderscore S\textunderscore  4496  &    282638           &       30NC\\

021\textunderscore S\textunderscore 4924        &    331257       &      30AD & 
128\textunderscore S\textunderscore 0229   &      12459         &       15NC &
 098\textunderscore S\textunderscore  4506  &    282934           &       30NC\\

137\textunderscore S\textunderscore 4756        &    332930       &      30AD & 
021\textunderscore S\textunderscore  0337  &    12466           &       15NC &
 094\textunderscore S\textunderscore  4459   &     283445          &       30NC\\

127\textunderscore S\textunderscore 4940        &    335512       &      30AD & 
 098\textunderscore S\textunderscore   0171 &     10818          &       15NC &
094\textunderscore S\textunderscore 4460   &     283573          &       30NC\\

027\textunderscore S\textunderscore 4938        &    336926       &      30AD & 
072\textunderscore S\textunderscore 0315  &     12559          &       15NC & 
010\textunderscore S\textunderscore 4442   &    283915           &       30NC\\

027\textunderscore S\textunderscore 4962        &    338558       &      30AD &  
137\textunderscore S\textunderscore 0301        &    12584 &     15NC &  
 007\textunderscore S\textunderscore 4516   &    284424           &      30NC  \\

 130\textunderscore S\textunderscore 4982        &    341787       &      30AD &  
 002\textunderscore S\textunderscore  0295        &     13722        &     15NC & 
 029\textunderscore S\textunderscore  4385   &     285589          &       30NC \\

 130\textunderscore S\textunderscore 4984        &    342274       &      30AD & 
 037\textunderscore S\textunderscore 0327         &   13802               &      15NC &
   094\textunderscore S\textunderscore  4503  &     286222          &       30NC\\

130\textunderscore S\textunderscore 4971        &    342338       &      30AD & 
027\textunderscore S\textunderscore  0403  &   14146            &       15NC &
  073\textunderscore S\textunderscore  4559  &    286553           &       30NC \\

127\textunderscore S\textunderscore 4992        &    342697       &      30AD & 
 137\textunderscore S\textunderscore  0459  &    14178           &       15NC  &
021\textunderscore S\textunderscore  4558  &    287527           &       30NC\\

019\textunderscore S\textunderscore 5012        &    343916       &      30AD &   
002\textunderscore S\textunderscore   0413 &     14437          &       15NC&  
109\textunderscore S\textunderscore  4499  &      288999         &       30NC \\
   
019\textunderscore S\textunderscore 5019        &    345663       &      30AD & 
068\textunderscore S\textunderscore 0473    &     14483          &       15NC &
100\textunderscore S\textunderscore 4469   &     289564          &       30NC \\
 
   002\textunderscore S\textunderscore 5018        &    346242       &      30AD & 
116\textunderscore S\textunderscore 0360   &    14623           &       15NC &  
100\textunderscore S\textunderscore  4511  &         289653      &       30NC\\

  127\textunderscore S\textunderscore 5028        &    346696       &      30AD & 
133\textunderscore S\textunderscore  0488  &    14838           &       15NC &
012\textunderscore S\textunderscore  4545  &      290413         &       30NC \\

130\textunderscore S\textunderscore 4997        &    347410       &      30AD &
 133\textunderscore S\textunderscore  0493  &     14848          &       15NC &
053\textunderscore S\textunderscore  4578  &     290814          &       30NC\\

 005\textunderscore S\textunderscore 5038        &    351432       &      30AD &
014\textunderscore S\textunderscore 0520   &    15299           &       15NC & 
  127\textunderscore S\textunderscore  4604  &      291523        &       30NC\\

   127\textunderscore S\textunderscore 5056        &    353203       &      30AD &  
   014 \textunderscore S\textunderscore  0519  &     15323          &       15NC & 
007\textunderscore S\textunderscore   4620 &     293938         &       30NC \\

127\textunderscore S\textunderscore 5058        &    354636       &      30AD & 
 116 \textunderscore S\textunderscore  0382  &      15347         &       15NC & 
127\textunderscore S\textunderscore   4645 &    295590           &       30NC\\

007\textunderscore S\textunderscore 0128        &    10007         &    15MCI &   
  128\textunderscore S\textunderscore  0500  &  15366             &       15NC &
 002\textunderscore S\textunderscore 4270       &      260581      &        30NC \\

010\textunderscore S\textunderscore 0161        &    10077         &    15MCI & 
 010\textunderscore S\textunderscore  0419  &    15415           &       15NC &
013\textunderscore S\textunderscore   4579 &    296776           &       30NC \\

021\textunderscore S\textunderscore 0141        &    10173         &    15MCI &  
131\textunderscore S\textunderscore  0436  &    15674           &       15NC &
 013\textunderscore S\textunderscore  4580  &   296859            &       30NC  \\

127\textunderscore S\textunderscore 0112        &     10419        &    15MCI & 
128\textunderscore S\textunderscore 0522   & 15821              &       15NC &    
 012\textunderscore S\textunderscore  4642  &    296878           &       30NC \\

128\textunderscore S\textunderscore 0135        &     10431        &     15MCI &
033\textunderscore S\textunderscore  0516  &    15860           &       15NC &  
 012\textunderscore S\textunderscore   4643 &      297693         &       30NC \\

128\textunderscore S\textunderscore 0138        &     10438        &     15MCI &
002\textunderscore S\textunderscore  0559  &    15948           &       15NC &
  029\textunderscore S\textunderscore  4585  &      298523         &       30NC \\

 098\textunderscore S\textunderscore 0160        &     10466        &     15MCI &  
 014\textunderscore S\textunderscore  0548  &  16024             &       15NC &
 013\textunderscore S\textunderscore  4616  &      300089         &       30NC \\

123\textunderscore S\textunderscore 0108        &     10738        &     15MCI & 
128\textunderscore S\textunderscore  0545  &     16090          &       15NC &    
029\textunderscore S\textunderscore  4652  &     300886          &       30NC \\

037\textunderscore S\textunderscore 0150        &     10773       &     15MCI & 
010\textunderscore S\textunderscore 0420   &     17078          &       15NC & 
 137\textunderscore S\textunderscore 4632   &      301677         &       30NC \\
  
   027\textunderscore S\textunderscore 0116        &      10783       &     15MCI &
 126 \textunderscore S\textunderscore  0506  &    17184           &       15NC &  
   094\textunderscore S\textunderscore  4649  &     302926          &       30NC\\

128\textunderscore S\textunderscore  0188       &      10897       &     15MCI &    
005\textunderscore S\textunderscore  0610  &     17303          &       15NC &
 016\textunderscore S\textunderscore   4638 &       305882        &       30NC \\

014\textunderscore S\textunderscore 0169        &      10987      &      15MCI & 
 006\textunderscore S\textunderscore  0484  &    17377           &       15NC &  
 013\textunderscore S\textunderscore  4731  &     308178          &       30NC \\

021\textunderscore S\textunderscore 0178        &       10993     &      15MCI &  
031\textunderscore S\textunderscore  0618  &     16598          &       15NC &   
136\textunderscore S\textunderscore  4726  &    308396           &       30NC\\

128\textunderscore S\textunderscore 0205        &       11011      &      15MCI &   
 016\textunderscore S\textunderscore 4951   &    337692           &       30NC &
016\textunderscore S\textunderscore   4688 &    310327           &       30NC \\

128\textunderscore S\textunderscore 0200        &       11012      &      15MCI & 
003\textunderscore S\textunderscore  4839  &   319414            &       30NC &
 019\textunderscore S\textunderscore  4835  &  315857             &       30NC \\

037\textunderscore S\textunderscore 0182        &       11121      &      15MCI &  
  003\textunderscore S\textunderscore  4900  &   325729            &       30NC &
 127\textunderscore S\textunderscore 4843   &    316771           &       30NC  \\
 
   003\textunderscore S\textunderscore  4840  &     319427          &       30NC & 
  003\textunderscore S\textunderscore  4872  &   321376            &       30NC & 
& &  \\



\bottomrule
\end{longtable}
\end{center}

\end{document}